\providecommand{\U}[1]{\protect\rule{.1in}{.1in}}
\newtheorem{theorem}{Theorem}
\newtheorem{corollary}{Corollary}
\newtheorem{definition}{Definition}
\newtheorem{remark}{Remark}
\begin{document}

\title{Smart Meter Privacy:\ A Utility-Privacy Framework}
\pubid{~}
\specialpapernotice{~}

%

\author{\authorblockN{S. Raj Rajagopalan\authorrefmark{1},
Lalitha Sankar\authorrefmark{2},  Soheil Mohajer\authorrefmark{2}%
, H. Vincent Poor\authorrefmark{2}}
\authorblockA{\authorrefmark{1}HP Labs,
Princeton, NJ 08544.
raj.raj@hp.com\\}
\authorblockA{\authorrefmark{2}Dept. of Electrical Engineering,
Princeton University,
Princeton, NJ 08544.
{lalitha,smohajer,poor}@princeton.edu\\}
}%
%

\maketitle
%

\begin{abstract}%

\footnotetext{The research was supported in part by the National Science
Foundation under Grants CCF-$10$-$16671$ and CNS-$09$-$05398$, in part by the
Air Force Office of Scientific Research MURI Grant FA-$9550$-$09$-$1$-$0643$,
and in part by DTRA under Grant HDTRA1-07-1-0037.}End-user privacy in smart
meter measurements is a well-known challenge in the smart grid. The solutions
offered thus far have been tied to specific technologies such as batteries or
assumptions on data usage. Existing solutions have also not quantified the
loss of benefit (utility) that results from any such privacy-preserving
approach. Using tools from information theory, a new framework is presented
that abstracts both the privacy and the utility requirements of smart meter
data. This leads to a novel privacy-utility tradeoff problem with minimal
assumptions that is tractable. Specifically for a stationary Gaussian Markov
model of the electricity load, it is shown that the optimal
utility-and-privacy preserving solution requires filtering out frequency
components that are low in power, and this approach appears to encompass most
of the proposed privacy approaches.%

\end{abstract}%

\section{Introduction}

Information collection and dissemination, some of it using smart meters, are
critical to the smart grid. But information about electricity consumption that
is collected and harnessed for a more efficient and multi-faceted grid may be
used for purposes beyond electricity consumption, thereby making it
potentially dangerous to individual privacy. The privacy consequences of smart
grid development are hard to understand for two principal reasons:\ (i)\ the
\ full range of technological capabilities and information extraction
possibilities have not been laid out, and (ii)\ our concept of privacy in this
space are yet poorly defined and shifting. Smart meters are an indispensable
enabler in the context of smart grids, which deploy advanced information and
communication technology to control the electrical grid.

The main motivations for high-resolution energy usage data collection are to
forecast load demand and to provide optimized service to consumers in the form
of pricing structure \cite{QuinnColState}. An electricity provider can use
this information to facilitate more efficient network management, peak load
reduction, load shaping, and a number of other such uses. However, it has been
known for some time that the information of appliance use can be reconstructed
from the overall real-time load using libraries of appliance load signatures
that could be matched to signals found within the noise of a customer's
aggregated electricity use and a large amount of detail concerning customer
usage habits can be discerned \cite{SGPArt2}\emph{. }\cite{QuinnColState}
cites a list of privacy-sensitive characteristics that may be inferred from
electricity load data ranging from house occupancy to personal habits and routines.

The NIST\ Smart Grid Interoperability Panel has also underlined risk to
privacy of personal behavior because new types of energy use data are created
and communicated by smart meters, such as unique electric signatures for
consumer electronics and appliances, thereby opening up further opportunities
for general invasion of privacy. \cite{Cavoukian-2010} suggest that there will
always \textquotedblleft be the temptation to sell such information such as
energy usage or appliance data, either in identifiable customer level,
anonymized or aggregate form to third parties such as marketers seeking
commercial gain.\textquotedblright. Thus, a desired feature of privacy design
in the smart grid would be \textquotedblleft positive-sum, not
zero-sum\textquotedblright\ in that it seeks to accommodate all legitimate
interests and objectives in a fair manner without completely sacrificing
privacy for utility or vice-versa.

A typical approach to privacy in smart meter data is aggregation along
dimensions of space (using neighborhood gateways, e.g. \cite{Li-SmartGrid2010}%
), time (using battery storage, e.g. \cite{SGPProc2}), or precision (by noise
addition, e.g. \cite{Rastogi:2010}). These solutions seek to support utility
and privacy in different ways; however, they do not have a robust theoretical
basis for both privacy and utility. Such a basis is important for several
reasons. First, a theoretical abstraction allows us to recast the problem in a
technology-independent manner -- we need a privacy framework that not only
addresses the capabilities of current non-intrusive load monitoring (NALM)
techniques but is also extensible to future ones. Second, a theoretical
framework enables us to examine the costs of lost privacy against the benefits
of data dissemination, namely, the tradeoff between privacy and utility. It
would be desirable to give each customer the ability to decide that tradeoff
and also to give the electricity provider the ability to incentivize the
customer to participate in such a bargain by offering interesting points of
tradeoff. Finally, a theoretical framework for privacy and utility may expose
points of tradeoff that are unexpected. 

We propose a general theoretical framework that brings most current treatments
of the privacy-utility tradeoff into a single model -- it enables us to look
at a spectrum of abstract privacy-utility choices and enables us to find
maximal points on such a tradeoff curve. It also suggests new possible ways of
achieving this tradeoff that have not been considered thus far.

What we have found is that suppressing low power components would be
consistent with intuitive notions of privacy in smart meter data. At the same
time, our utility constraints guarantee that the bulk of the energy
consumption information in the load measurements is retained in the revealed
data. This suggests that it may indeed be possible to reveal significant
energy consumption information without also revealing a lot of personal
information and the resulting tradeoff can be tuned. This would be an
interesting avenue for further exploration. The paper is organized as follows.
In Section \ref{SecII}, we outline current approaches to smart meter privacy.
In Section \ref{SecIII}, we develop our model, metrics, and the
privacy-utility tradeoff framework and illustrate our results in Section
\ref{SecIV}.

\section{\label{SecII}Related Work}

The advantages and usefulness of smart meters in general is examined in a
number of papers; see for example \cite{Deconinck} and the references therein.
\cite{SGPProc2} presents a pioneering view of privacy of smart meter
information:\ the authors identify the need for privacy in a home's load
signature as being an inference violation (resulting from load signatures of
home appliances)\ rather than an identity violation (i.e. loss of anonymity).
Accordingly, they propose home electrical power routing using rechargeable
batteries and alternate power sources to moderate the effects of load
signatures. They also propose three different privacy metrics: relative
entropy, clustering classification, and a correlation/regression metric.
However they do not propose any formal utility metrics to quantify the
utility-privacy tradeoff.

Recently, \cite{SGPProc1} proposes additional protection through the use of a
trusted escrow service, along with randomized time intervals between the setup
of attributable and anonymous data profiles at the smart meter.
\cite{Molina-Markham:2010} shows, somewhat surprisingly, that even without
\textit{a priori} knowledge of household activities or prior training it is
possible to extract complex usage patterns from smart meter data such as
residential occupancy and social activities very accurately using
off-the-shelf statistical methods. \cite{Li-SmartGrid2010}\ and
\cite{Molina-Markham:2010} propose privacy-enhancing designs using
neighborhood-level aggregation and cryptographic protocols to communicate with
the energy supplier without compromising the privacy of individual homes.
However, escrow services and neighborhood gateways support only restricted
query types and do not completely solve the problem of trustworthiness.
\cite{Varodayan-ICASSP2011} presents a formal state transition diagram-based
analysis of the privacy afforded by the rechargeable battery model proposed in
\cite{SGPProc2}. However, \cite{Varodayan-ICASSP2011} does not offer a
comparable model of utility to compare the risks of information leakage with
the benefits of the information transmitted.

In, \cite{Rastogi:2010} the authors present a method of providing
\emph{differential privacy} over aggregate queries modeling smart meter
measurements as time-series data from multiple sources containing temporal
correlations. While their approach has some similarity to ours in terms of
time-series data treatment, their method does not seem generalizable to
arbitrary query types. On the other hand, \cite{Papadimitriou:2007} introduces
the notion of partial information hiding by introducing uncertainty about
individual values in a time series by perturbing them. Our method is a more
general approach to time series data perturbation that guarantees that the
perturbation cannot be eliminated by averaging.

\section{\label{SecIII}Our Contributions}

The primary challenge in characterizing the privacy-utility tradeoffs for
smart meter data is creating the right abstraction -- we need a principled
approach that provides quantitative measures of both the amount of information
leaked as well as the utility retained, does not rely on any assumptions of
data mining algorithms, and provides a basis for a negotiated level of benefit
for both consumer and supplier \cite{Cavoukian-2010}.
\cite{Varodayan-ICASSP2011} provides the beginnings of such a model -- they
assume that in every sampling time instant, the net load is either 0 or 1
power unit represented by the smart meter readings $X_{k}$, $k=1,2,...,$ are a
discrete-time sequence of binary independent and identically distributed
values. They model the battery-based filter of \cite{SGPProc2} as a stochastic
transfer function that outputs a binary sequence $\hat{X}_{k}$ that tells the
electricity provider whether the home is drawing power or not at any given
moment. The amount of information leaked by the transfer function is defined
to be the mutual information rate $I(X;\hat{X})$\ between the random variables
$X$ and $\hat{X}$. By modeling the battery charging policy as a 2-state
stochastic transition machine, they show that there exist battery policies
that result in less information leakage than from the deterministic charging
policy of \cite{SGPProc2}. Though \cite{Varodayan-ICASSP2011} does not provide
a general utility function to go with the chosen privacy function and the
modeling assumptions are extremely simplistic, it nevertheless provides a good
starting point for our framework.

In our model, we assume that the load measurements are sampled (at an
appropriate frequency)\ from a smart meter, that they are real-valued, and can
be correlated (models the temporal memory of both appliances and human usage
patterns). Rather than assume any specific transfer function, we assume an
abstract transfer function which maps the input load measurements $X$ into an
output sequence $\hat{X}$. As in \cite{Varodayan-ICASSP2011}, we assume a
mutual information rate as a metric for privacy leakage; however, we allow for
the fact that a large space of (unknown to us) inferences can be made from the
meter data -- we model the inferred data as a random variable $Y$ correlated
with the measurement variable $X$. Thus, the privacy leakage is the mutual
information between $Y$ and $\hat{X}$. We also provide an abstract utility
function which measures the fidelity of the output sequence $\hat{X}$ by
limiting the Euclidean distance (mean square error) between $X$ and $\hat{X}$.
Using these abstractions and tools from the theory of rate distortion we are
able to meet all our requirements for a general but tractable privacy-utility
framework:\ the privacy and utility requirements provide opposing constraints
that expose a spectrum of choices for trading off privacy for utility and vice-versa.

\subsection{Model}

We write $x_{t}$, $t=1,2,\ldots,n,$ to denote the sampled load measurements
from a smart meter. In general$,$ $x_{t}$ are complex valued corresponding to
the real and reactive measurements and are typically vectors for multi-phase
systems \cite{SGPArt2}. For simplicity and ease of presentation, we model the
meter measurements as a sequence of real-valued scalars (for example, such a
model applies to two-phase $120~$V appliances for which one of the two phase
components is zero).

For appropriately small sampling intervals, the smart meter time-series data
that result from sampling the underlying continuous-time continuous-amplitude
processes can be viewed as being generated by a random source with memory. The
memory models the continuity and the effect of both short-term and long-term
correlations in the load measurements. The short term correlations typically
model the effect of the set of appliances in use over the said duration while
the long term correlations model the long term power usage pattern of the
human user. We model the continuous valued smart meter data as a sequence
$\ldots,X_{k-1},X_{k},X_{k+1},\ldots$, of random variables $X_{k}%
\in\mathcal{X}$, $-\infty<k<\infty,$ generated by a stationary continuous
valued source with memory. Specifically, we model the continuous valued
discrete-time smart meter data as a sequence $\ldots,X_{k-1},X_{k}%
,X_{k+1},\ldots$, of Gaussian random variables $X_{k}\in\mathcal{X}$,
$k=0,\pm1,...$, generated by a stationary Gaussian source with memory captured
via the autocorrelation function
\begin{equation}
c_{XX}\left(  m\right)  =E\left[  X_{k}X_{k+m}\right]  ,m=0,\pm1,\pm
2,....\label{corr_func}%
\end{equation}
The assumption of normal distribution for total load is a simplification from
empirical observations \cite{HP} that the power consumption pattern of a
typical appliance in the on state is approximately Gaussian.

\subsection{Utility and Privacy Metrics}

Since continuous amplitude sources cannot be transmitted losslessly over
finite capacity links, a sampled sequence of $n$ load measurements $X^{n}$ is
compressed before transmission. In general, however, even if the sampled
measurements were quantized \textit{a priori}, i.e., take values in a discrete
alphabet, there may be a need to perturb (distort) the data in some way to
guarantee a measure of privacy. However, such a perturbation also needs to
maintain a desired level of fidelity.

Intuitively, utility of the perturbed data is high if any function computed on
it yields results similar to those from the original data; thus, the utility
is highest when there is no perturbation and goes to zero when the perturbed
data is completely unrelated to the original. Accordingly, our utility metric
is an appropriately chosen average `distance' \textit{distortion function}
between the original and the perturbed data.

Privacy, on the other hand, is maximized when the perturbed data is completely
independent of the original. Our privacy metric measures the difficulty of
inferring any private information of the data collector's choice, defined as a
sequence $\left\{  Y_{k}\right\}  $ of random variables $Y_{k}\in\mathcal{Y}$,
-$\infty<k<\infty$, which is correlated with and can be inferred from the
revealed data. The random sequence $\left\{  Y_{k}\right\}  $ for all $k$
along with the joint distribution $p_{X^{n}Y^{n}}$ mathematically captures the
space of all inferences that can be made from the measurements. We quantify
the resulting privacy loss as a result of revealing perturbed data via the
\textit{mutual information} between the two data sequences.

As an aside, we note here that our model of privacy is between a single user
(household)\ and the electricity provider. It does not consider the leakage
possibilities of comparing the perturbed data from two or more different
users. On the other hand it can address the possibility of side-information
such as income level of the user that may cause further information leakage.
If we know the statistics of the side-information that we can incorporate the
possible leakage into the model and derive the consequent modified
privacy-utility tradeoff. For simplicity we ignore the side-information aspect
in this paper.

\subsection{Perturbation: Encoding and Decoding}

\textit{Encoding}: We assume that a meter collects $n\gg1$ measurements in an
interval of time prior to communication and that $n$ is large enough to
capture the source's memory. The encoding function is then a mapping of the
resulting \textit{source sequence} $X^{n}=\left(  X_{1}\text{ }X_{2}\text{
}X_{3}\text{ }\ldots\text{ }X_{n}\right)  $, where $X_{k}\in\mathbb{R},$ for
all $k=1,2,\ldots,n,$ to an index $W_{n}\in\mathcal{W}_{n}$ given by%
\begin{equation}
F_{E}:\mathcal{X}^{n}\rightarrow\mathcal{W}_{n}\equiv\left\{  1,2,\ldots
,M_{n}\right\}  \label{F_Enc}%
\end{equation}
where each index represents a quantized sequence.

\textit{Decoding}: The decoder (at the data collector) computes an output
sequence $\hat{X}^{n}=\left(  \hat{X}_{1}\text{ }\hat{X}_{2}\text{ }\hat
{X}_{3}\text{ }\ldots\text{ }\hat{X}_{n}\right)  ,$ $\hat{X}_{k}\in\mathbb{R}%
$, for all $k,$ using the decoding function%
\begin{equation}
F_{D}:\mathcal{W}\rightarrow\mathcal{\hat{X}}^{n}.\label{F_Dec}%
\end{equation}
The encoder is chosen such that the input and output sequences achieve a
desired utility given by an average distortion constraint
\begin{equation}
D_{n}=\frac{1}{n}%
{\textstyle\sum\limits_{k=1}^{n}}
\mathbb{E}\left[  \left(  X_{k}-\hat{X}_{k}\right)  ^{2}\right]
\label{Distortion}%
\end{equation}
and a constraint on the information leakage about the desired sequence
$\left\{  Y_{k}\right\}  $ from the revealed sequence $\left\{  \hat{X}%
_{k}\right\}  $ is quantified via the leakage function
\begin{equation}
L_{n}=\frac{1}{n}I\left(  Y^{n};\hat{X}^{n}\right)  \label{Leakage}%
\end{equation}
where $\mathbb{E}\left[  \cdot\right]  $ denotes the expectation over the
joint distribution of $X^{n}$ and $\hat{X}^{n}$ given by $p_{X\hat{X}}\left(
x^{n},\hat{x}^{n}\right)  =P_{X^{n}}^{n}\left(  x^{n}\right)  p_{t}(\hat
{x}^{n}|x^{n})$ where $p_{t}(\hat{x}^{n}|x^{n})$ is a conditional pdf on
$\hat{x}^{n}$ given $x^{n}$. The mean-square error (MSE) distortion function
chosen in (\ref{Distortion}) is typical for Gaussian distributed real-valued
data as a measure of the fidelity of the perturbation (encoding). 

Note that $D_{n}$ and $L_{n}$ are functions of the number of measurements $n$
and for stationary sources converge to limiting values \cite{CTbook}. Let $D$
and $L$ denote the corresponding limiting values for utility and privacy,
respectively, i.e.,
\begin{equation}%
\begin{array}
[c]{ccc}%
D\equiv\lim\limits_{n\rightarrow\infty}D_{n} & \text{and} & L\equiv
\lim\limits_{n\rightarrow\infty}L_{n}.
\end{array}
\end{equation}

\subsection{Utility-Privacy Tradeoff Region}

Formally, the utility-privacy tradeoff region $\mathcal{T}$ is defined as follows.

\begin{definition}
The smart meter utility-privacy tradeoff region $\mathcal{T}$ is the set of
all $\left(  D,L\right)  $ pairs for which there exists a coding scheme given
by (\ref{F_Enc}) and (\ref{F_Dec}) with parameters $(n,M_{n},D_{n}%
+\epsilon,L_{n}+\epsilon)$ satisfying (\ref{Distortion}) and (\ref{Leakage})
for $n$ sufficiently large and $\epsilon>0$.
\end{definition}

\textit{Rate-Distortion-Leakage}: The above utility-privacy tradeoff problem
does not explicitly bound the number $M_{n}$ of encoded
(quantized)\ sequences. An explicit constraint on
\begin{equation}
M_{n}\leq2^{n(R_{n}+\epsilon)} \label{Rate}%
\end{equation}
results in a rate-distortion-leakage (RDL) tradeoff problem for which the
feasible region is defined as follows. Let $R=\lim_{n\rightarrow\infty}\left(
\log M_{n}\right)  /n.$

\begin{definition}
The rate-distortion-leakage tradeoff region $\mathcal{R}_{RDL}$ is the set of
all $\left(  R,D,L\right)  $ tuples for which there exists a coding scheme
given by (\ref{F_Enc}), (\ref{F_Dec}), and (\ref{Rate}) with parameters
$(n,M_{n},D_{n}+\epsilon,L_{n}+\epsilon)$ satisfying (\ref{Distortion}) and
(\ref{Leakage}) for $n$ sufficiently large and $\epsilon>0$. The function
$\lambda\left(  D\right)  $ quantifies the minimal leakage achievable for a
feasible distortion $D$ such that the set of all $\left(  R,D,\lambda
(D)\right)  $ are the boundary points of $\mathcal{R}_{RDL}$.
\end{definition}

\begin{theorem}
\label{Th_RDEEq}$\mathcal{T=}\left\{  \left(  D,L\right)  :(R,D,L)\in
\mathcal{R}_{RDL},D\in\lbrack0,\right.  $ $\left.  D_{\max}],L\geq
\lambda\left(  D\right)  \right\}  .$
\end{theorem}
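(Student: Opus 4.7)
The plan is to prove the set equality by establishing both inclusions, noting that $\mathcal{T}$ and $\mathcal{R}_{RDL}$ differ only in that the latter imposes the additional rate constraint (\ref{Rate}). Morally, $\mathcal{T}$ is the projection of $\mathcal{R}_{RDL}$ onto the $(D,L)$ coordinates, and the condition $L\geq\lambda(D)$ is built into the definition of $\lambda$ as the minimum feasible leakage at distortion $D$, so the theorem ought to reduce to unfolding definitions.

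For the forward inclusion $\mathcal{T}\subseteq\{(D,L):(R,D,L)\in\mathcal{R}_{RDL},\,D\in[0,D_{\max}],\,L\geq\lambda(D)\}$, I would fix $(D,L)\in\mathcal{T}$ and extract a witnessing sequence of codes $(F_{E}^{(n)},F_{D}^{(n)})$ with parameters $(n,M_{n},D_{n}+\epsilon,L_{n}+\epsilon)$ satisfying (\ref{Distortion}) and (\ref{Leakage}). Define $R_{n}=(\log M_{n})/n$ and let $R=\lim_{n\to\infty}R_{n}$, passing to a convergent subsequence if necessary. By construction, the very same code sequence then satisfies the additional rate constraint (\ref{Rate}), so $(R,D,L)\in\mathcal{R}_{RDL}$. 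Since $\lambda(D)$ is the minimum leakage at distortion $D$ among all such tuples, $L\geq\lambda(D)$ follows immediately, and $D\leq D_{\max}$ is simply the distortion achieved by the trivial constant-output decoder $\hat{X}^{n}=\mathbb{E}[X^{n}]$, beyond which no further privacy can be gained.

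For the reverse inclusion, given $(D,L)$ with $D\in[0,D_{\max}]$ and $L\geq\lambda(D)$, invoke the definition of $\lambda$ to select $R^{\ast}$ such that $(R^{\ast},D,\lambda(D))$ is a boundary point of $\mathcal{R}_{RDL}$, and let $(F_{E}^{(n)},F_{D}^{(n)})$ be the associated code sequence. This sequence has distortion at most $D+\epsilon$ and leakage at most $\lambda(D)+\epsilon\leq L+\epsilon$; because $\mathcal{T}$ does not impose any rate ceiling, dropping the rate constraint can only enlarge the set of admissible schemes, so the same code sequence witnesses $(D,L)\in\mathcal{T}$. Monotonicity of the leakage constraint in $L$ (achieving leakage $\lambda(D)$ automatically satisfies any relaxed leakage budget $L\geq\lambda(D)$) is used here without the need for an explicit time-sharing step.

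The main technical obstacle, modest as it is, will be in bookkeeping for the limits: justifying that $R_{n}$ admits a convergent (or at least bounded) subsequence so that an $R\in[0,\infty)$ pairs with $(D,L)$ in $\mathcal{R}_{RDL}$, and making the $\epsilon$-tolerances in (\ref{Distortion}) and (\ref{Leakage}) match on both sides of the equivalence. One must also confirm the intended reading of the definition, namely $\lambda(D)=\min\{L:\exists R,(R,D,L)\in\mathcal{R}_{RDL}\}$, after which achievability and converse carry over cleanly and the equality becomes essentially a restatement: $\mathcal{T}$ is the rate-unconstrained relaxation of $\mathcal{R}_{RDL}$, entirely characterized by its $\lambda$-curve on $[0,D_{\max}]$.
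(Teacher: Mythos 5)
Your proposal is correct and follows essentially the same route as the paper's (much terser) proof sketch: the converse is that $\lambda(D)$ lower-bounds the leakage of any scheme regardless of its rate, and achievability is that the rate-constrained optimal code remains admissible once the rate ceiling is dropped, since $\mathcal{T}$ imposes no bound on $M_{n}$. One refinement worth making: rather than extracting a convergent subsequence of $R_{n}=(\log M_{n})/n$ from the witnessing code (which may diverge for a continuous-amplitude source), obtain the membership $(R,D,L)\in\mathcal{R}_{RDL}$ from the achievability of the boundary point $\left(  R(D,\lambda(D)),D,\lambda(D)\right)  $, and obtain $L\geq\lambda(D)$ directly from the fact that the infimum in (\ref{lambdaD_n}) is taken over all conditional distributions $p\left(  \hat{x}^{n}|x^{n}\right)  $ meeting the distortion constraint and hence is rate-free --- this sidesteps the bookkeeping issue you flag at the end.
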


\textit{Proof sketch}: The crux of our argument is the fact that for any
feasible utility vector $D$, choosing the minimum rate $R\left(
D,\lambda(D)\right)  $, ensures that the least amount of \textit{information}
is revealed about the source via the reconstructed variable. This in turn
ensures that the minimal leakage $\lambda(D)$ of the correlated sequence
$Y^{n}$ is achieved for that utility. For the same utility constraint, since
such a rate requirement is not a part of the utility-privacy model, the
resulting maximal privacy achieved is at most as large as that in
$\mathcal{R}_{RDL}$.

\subsection{Rate-Distortion-Leakage Tradeoff}

We now use Theorem \ref{Th_RDEEq} to precisely quantify the utility-privacy
tradeoff via the RDL tradeoff region. The proof is a direct generalization of
the RDL region for memoryless sources (see, for example, \cite{Yamamoto,SRV4}%
), and hence, is omitted for lack of space. Intuitively, the proof follows
from upper and lower bounding the minimal communication rate $R$ as a function
of $D$ and $L$ and the minimal leakage rate $\lambda$ as a function of $D$.

\begin{theorem}
The rate-distortion-leakage region for a source with memory subject to
distortion and leakage constraints in (\ref{Distortion}) and (\ref{Leakage})
is given by the rate-distortion and minimal leakage functions%
\begin{align}
R(D,L)  &  =\lim_{n\rightarrow\infty}\inf_{p\left(  x^{n},y^{n}\right)
p\left(  \hat{x}^{n}|x^{n}\right)  }\frac{1}{n}I\left(  X^{n};\hat{X}%
^{n}\right) \label{RDL_n}\\
\lambda(D)  &  =\lim_{n\rightarrow\infty}\inf_{p\left(  x^{n},y^{n}\right)
p\left(  \hat{x}^{n}|x^{n}\right)  }\frac{1}{n}I\left(  Y^{n};\hat{X}%
^{n}\right)  . \label{lambdaD_n}%
\end{align}
The utility-privacy tradeoff is captured by $\lambda(D)$ which is the minimal
privacy leakage for a desired distortion (utility) $D$.
\end{theorem}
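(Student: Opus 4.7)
The plan is to establish the theorem by proving matching achievability and converse bounds on both the rate function $R(D,L)$ and the minimal leakage function $\lambda(D)$, following the standard rate-distortion template but in $n$-letter form to accommodate the source's memory; passage to the limit then yields the claimed expressions.

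For the achievability (direct) part, I would fix a block length $n$ and a test channel $p(\hat{x}^n \mid x^n)$ that satisfies the average distortion constraint (\ref{Distortion}). I would then generate a random codebook of $M_n = 2^{n(R_n + \epsilon)}$ codewords $\hat{X}^n(w)$ drawn i.i.d.\ from the induced marginal $p(\hat{x}^n)$. The encoder $F_E$ maps $X^n$ to the index $w$ of the first jointly typical codeword with respect to $p(x^n, \hat{x}^n)$, and $F_D$ simply outputs the selected $\hat{X}^n$. A covering-lemma / joint typicality argument shows that a rate $R_n = (1/n) I(X^n; \hat{X}^n) + \epsilon$ suffices for successful encoding with vanishing failure probability, and typical-average distortion bounds then give distortion $D + \epsilon$. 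For the leakage, since $(X^n, Y^n, \hat{X}^n)$ is jointly typical with high probability under the generated distribution, $(1/n) I(Y^n; \hat{X}^n)$ is upper bounded (up to $\epsilon$) by its expectation under the target distribution $p(x^n, y^n) p(\hat{x}^n \mid x^n)$. Minimizing over all such test channels yields points on the boundary.

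For the converse, consider any sequence of codes achieving $(D_n, L_n)$ with rate $R_n$. Since $\hat{X}^n$ is a deterministic function of the codeword index $W_n$, we have $n R_n \geq H(W_n) \geq I(X^n; W_n) \geq I(X^n; \hat{X}^n)$, and by the data-processing inequality along the Markov chain $Y^n \to X^n \to \hat{X}^n$ (which holds because the encoder acts only on $X^n$), $n L_n = I(Y^n; \hat{X}^n)$. The induced conditional law $p(\hat{x}^n \mid x^n)$ is a feasible test channel satisfying the distortion constraint, so $R_n \geq (1/n) I(X^n; \hat{X}^n)$ and $L_n \geq (1/n) I(Y^n; \hat{X}^n)$ for some admissible joint distribution, proving that no scheme can do better than the infima in (\ref{RDL_n})--(\ref{lambdaD_n}).

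Finally, I would argue that the limits as $n \to \infty$ exist by stationarity and a subadditivity argument on the normalized mutual information functionals, which is the standard route used in \cite{CTbook} to lift single-letter rate-distortion characterizations to stationary sources with memory. The main obstacle, and the reason a single-letter formula is unavailable in general, is precisely this step: unlike the memoryless case where one can reduce everything to scalar test channels via convexity and time-sharing, the presence of temporal correlations in $\{X_k\}$ and $\{Y_k\}$ forces us to retain the $n$-letter infimum and verify convergence through stationarity of the joint process $(X^n, Y^n, \hat{X}^n)$ rather than obtain a closed-form single-letter expression.
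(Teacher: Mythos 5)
Your proposal follows essentially the same route the paper itself invokes by reference: the standard achievability (random covering with an $n$-letter test channel) and converse (data-processing plus feasibility of the induced conditional law) bounds, with the limit handled by stationarity, which is exactly the ``direct generalization of the RDL region for memoryless sources'' the authors cite to \cite{Yamamoto,SRV4} and omit. The only step deserving more care than your sketch gives it is the leakage achievability, where ``joint typicality with high probability'' does not by itself bound $\frac{1}{n}I(Y^{n};\hat{X}^{n})$ under the code-induced distribution by its value under the test channel (especially for continuous alphabets, where differential entropy is not continuous in total variation); the standard fix is the entropy-chain argument of Yamamoto bounding $\frac{1}{n}I(Y^{n};W_{n})$ directly, but this is a known lemma rather than a gap in the approach.
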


\begin{remark}
The Markov relationship $Y^{n}-X^{n}-\hat{X}^{n}$ is captured via the set of
all distributions in (\ref{RDL_n}) and (\ref{lambdaD_n}) which minimize
$R(D,L)$ and $\lambda(D)$.
\end{remark}

\begin{corollary}
For $Y_{k}=X_{k}$, for all $k,$ i.e., for the case in which the actual
measurements need to be undisclosed, $\lambda(D)=R(D,L)=R(D)$ where $R(D)$ is
the rate-distortion function for the source.
\end{corollary}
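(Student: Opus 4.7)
The plan is to exploit the observation that setting $Y_k = X_k$ collapses the privacy leakage into exactly the same mutual information quantity that appears in the rate-distortion function, so that the three quantities in the corollary all reduce to the same optimization problem.

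First, I would substitute $Y^n = X^n$ directly into the leakage expression in (\ref{lambdaD_n}). This gives
\begin{equation}
\lambda(D) = \lim_{n\rightarrow\infty}\inf_{p(\hat{x}^n|x^n)} \frac{1}{n} I(X^n;\hat{X}^n),
\end{equation}
where the infimum is over conditional distributions $p(\hat{x}^n|x^n)$ satisfying the distortion constraint $D$ (the joint distribution $p(x^n,y^n)$ is now degenerate since $Y^n$ is a deterministic function of $X^n$). By definition, this is precisely the classical $n$-letter rate-distortion function of the stationary Gaussian source $\{X_k\}$, whose limit is $R(D)$ \cite{CTbook}. Hence $\lambda(D) = R(D)$.

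Next I would show $R(D,L) = R(D)$ for the relevant operating point. For $Y^n = X^n$, the leakage constraint in the definition of $R(D,L)$ becomes $\frac{1}{n} I(X^n;\hat{X}^n) \leq L$, which is identical in form to the objective being minimized. Any distribution achieving the rate-distortion minimum $R(D)$ therefore also satisfies the leakage constraint whenever $L \geq R(D)$, so the leakage constraint is inactive on the tradeoff boundary and $R(D,L) = R(D)$. Conversely, since $\lambda(D) \leq R(D,L)$ always holds (both minimize quantities bounded by the same mutual information when $Y=X$), the chain of equalities $\lambda(D) = R(D,L) = R(D)$ follows.

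The argument is essentially definitional once the substitution $Y^n = X^n$ is made, and I do not expect a real obstacle; the only mildly subtle point is verifying that the joint feasibility set over $(p(x^n,y^n),p(\hat{x}^n|x^n))$ in (\ref{RDL_n})--(\ref{lambdaD_n}) collapses correctly when $p(y^n|x^n)$ is forced to be a point mass on $x^n$, and that the Markov chain $Y^n - X^n - \hat{X}^n$ noted in the preceding remark is trivially satisfied.
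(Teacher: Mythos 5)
Your proposal is correct and follows essentially the same route the paper intends: the corollary is immediate from substituting $Y^{n}=X^{n}$ into (\ref{RDL_n}) and (\ref{lambdaD_n}), which makes the leakage objective identical to the rate objective so that both infima equal the classical rate-distortion function $R(D)$, with the leakage constraint inactive on the boundary $L=\lambda(D)$. The paper offers no separate proof of this corollary, treating it as definitional exactly as you do.
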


In general, the optimal distribution minimizing the rate subject to both the
distortion and leakage constraints depends on the joint distribution of the
measurement and inference sequences. Modeling this relationship is, in
general, not straightforward or known \textit{a priori}. Given this
limitation, we consider a simple linear inference model given by
\begin{equation}
Y_{k}=\alpha_{k}X_{k}+Z_{k},\text{ for all }k,\label{YkXk_model}%
\end{equation}
where $Z_{k}\sim\mathcal{N}\left(  0,1\right)  $ is independent of $X_{k}$,
and $\alpha_{k}$ are constants. In this paper, we limit our results to these
models to simplify our analysis and develop the intuition that can eventually
lead us to develop complete solutions for a more general inference model. The
following theorem captures our result.

\begin{theorem}
\label{Th_RDE2RD}The utility-privacy tradeoff for smart meter measurements
modeled as a Gaussian source with memory with $Y_{k}=\alpha_{k}X_{k}+Z_{k}$,
for all $k$, is given by the leakage function $\lambda(D)$ which results from
choosing the distribution $p\left(  \hat{x}^{n}|x^{n}\right)  $ as the
rate-distortion (without privacy) optimal distribution.
\end{theorem}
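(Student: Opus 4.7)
The plan is to reduce the leakage-minimization to a determinant maximization over the covariance of the distortion noise and show that the resulting optimizer coincides with the reverse-water-filling solution of the rate-distortion problem. I would first write $I(Y^{n};\hat X^{n}) = h(Y^{n})-h(Y^{n}|\hat X^{n})$; since $h(Y^{n})$ is fixed by the source and the inference model (\ref{YkXk_model}), minimizing leakage reduces to maximizing $h(Y^{n}|\hat X^{n})$. Letting $\tilde N^{n}=X^{n}-\hat X^{n}$ and using that $\alpha\hat X^{n}$ is a deterministic function of $\hat X^{n}$, the substitution $Y^{n}=\alpha X^{n}+Z^{n}$ gives $h(Y^{n}|\hat X^{n}) = h(\alpha\tilde N^{n}+Z^{n}|\hat X^{n})$.

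I would next restrict, without loss of generality, to jointly Gaussian $(X^{n},\hat X^{n})$ by a conditional-Gaussianization argument (bounding $h(Y^{n}|\hat X^{n}=\hat x)\leq\tfrac{1}{2}\log\det(2\pi e\,\Sigma_{Y^{n}|\hat X^{n}=\hat x})$ for each $\hat x$, then applying Jensen's inequality to $\log\det$), and then further restrict to the backward form $X^{n}=\hat X^{n}+\tilde N^{n}$ with $\tilde N^{n}\perp\hat X^{n}$: for every jointly Gaussian test channel, the leakage depends only on $K\triangleq\Sigma_{X^{n}\hat X^{n}}\Sigma_{\hat X^{n}}^{-1}\Sigma_{\hat X^{n}X^{n}}$, and a short optimization shows that the minimum of $\tfrac{1}{n}\mathbb{E}\|X^{n}-\hat X^{n}\|^{2}$ over all joint Gaussians with that $K$ equals $\tfrac{1}{n}\operatorname{tr}(\Sigma_{X^{n}}-K)$ and is attained exactly by the backward form $\Sigma_{\hat X^{n}}=K$. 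After both reductions, setting $\Sigma_{\tilde N}=\Sigma_{X^{n}}-K$ and using $\tilde N^{n}\perp\hat X^{n}$, one has $h(Y^{n}|\hat X^{n})=\tfrac{1}{2}\log\det\bigl(2\pi e(\alpha\Sigma_{\tilde N}\alpha^{T}+I_{n})\bigr)$, and the leakage problem becomes
\[
\max_{\Sigma_{\tilde N}}\;\log\det\bigl(\alpha\Sigma_{\tilde N}\alpha^{T}+I_{n}\bigr)\quad\text{s.t.}\quad\tfrac{1}{n}\operatorname{tr}(\Sigma_{\tilde N})\leq D,\;\;0\preceq\Sigma_{\tilde N}\preceq\Sigma_{X^{n}}.
\]

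In the stationary case $\alpha_{k}\equiv\alpha$, $\alpha\Sigma_{\tilde N}\alpha^{T}$ shares an eigenbasis with $\Sigma_{X^{n}}$ at the optimum, and the KKT system gives optimal eigenvalues $\lambda_{i}^{\star}=\min(\sigma_{i}^{2},\theta)$, with $\theta$ determined by $\tfrac{1}{n}\sum_{i}\min(\sigma_{i}^{2},\theta)=D$ --- precisely the reverse-water-filling profile of the classical RD-optimal backward Gaussian test channel, proving the theorem. Letting $n\to\infty$ via the Szeg\H{o}--Toeplitz correspondence converts this into the spectral statement $S_{\tilde N}^{\star}(\omega)=\min(S_{X}(\omega),\theta)$. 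I expect the hardest step to be the reduction to backward Gaussian channels: one must track through the Gaussian extremality argument (which is delicate for mutual information, since both $h(Y^{n})$ and $h(Y^{n}|\hat X^{n})$ can a priori increase under Gaussianization) and combine it with the distortion-minimality of the backward form to conclude that the restriction is indeed without loss. A secondary difficulty arises for time-varying $\alpha_{k}$, where the simultaneous-diagonalization step fails and the clean reverse-water-filling conclusion must be extracted through a change of variables or a separate Lagrangian argument.
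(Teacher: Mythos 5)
Your proposal is correct and follows essentially the same route as the paper's (much terser) proof sketch: expand $I(Y^{n};\hat X^{n})=h(Y^{n})-h(Y^{n}|\hat X^{n})$, note $h(Y^{n})$ is fixed, use the Markov chain $Y^{n}-X^{n}-\hat X^{n}$ together with Gaussian extremality to reduce to a backward Gaussian test channel, and observe that the resulting $\log\det$ maximization over $\Sigma_{\tilde N}$ produces the same reverse-water-filling profile $\min(\mathcal{S}_{X}(\omega),\phi)$ as the rate-distortion problem. Your closing caveat about time-varying $\alpha_{k}$ is well taken --- the paper's one-paragraph argument silently assumes the simultaneous-diagonalization step goes through, which is only immediate for constant $\alpha_{k}$ (the natural stationary case).
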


\begin{proof}
The proof follows directly from noting that, for a given jointly Gaussian
distribution of the source and correlated hidden sequence, $p_{X^{n}Y^{n}}$,
the infimum in (\ref{RDL_n}) and (\ref{lambdaD_n}) is strictly over the space
of conditional distributions of the revealed sequence given the original
source sequence as a result of the Markov chain relationship $Y^{n}-X^{n}%
-\hat{X}^{n}.$ Expanding the leakage as $I(Y^{n};\hat{X}^{n})=h(Y^{n}%
)-h(Y^{n}|\hat{X}^{n}),$ and using the fact for correlated Gaussian processes,
$Y_{k}=\alpha_{k}X_{k}+Z_{k}$, for all $k$, where $\left\{  Z_{k}\right\}  $
is a sequence independent of $\left\{  X_{k}\right\}  $ and $\alpha_{k}$ is a
constant for each $k$, one can show that the jointly Gaussian distribution of
$X^{n}$ and $\hat{X}^{n}$ which minimizes (\ref{RDL_n}) also minimizes
(\ref{lambdaD_n}).
\end{proof}

\begin{remark}
Theorem \ref{Th_RDE2RD} simplifies the development of the RDL region for
Gaussian sources with memory for which the rate-distortion function is known.
For Gaussian sources with memory the rate-distortion function is known and
lends itself to a straightforward practical implementation that we discuss in
the following section.
\end{remark}

\subsection{Rate-Distortion for Gaussian Sources with\ Memory}

In general, the rate distortion functions for sources with memory are not
straightforward to compute. However, for Gaussian sources, the rate-distortion
function $R(D)$ (without the additional privacy constraint) is known and can
be obtained via a transformation of the correlated source sequence $X^{n}$ to
its eigen-space in which the transformed sequence $\tilde{X}^{n}$ is a
collection of independent random variables with, in general, different variances.

A standard approach to analyze correlated data is to project the data to an
orthogonal basis in which the leakage and distortion constraints remain
invariant. Since the data is random, we project on to the principal axes of
the $n\times n$ correlation matrix $G_{XX}$ whose entry in the $i^{th}$ row
and $j^{th}$ column is $c\left(  i-j\right)  $ defined in (\ref{corr_func})
for which the mean-square error (Euclidean distance) function and the mutual
information leakage are invariant. Thus, while the constraints for the
original and transformed measurements are the same, the advantage of the
transformation is that the resulting measurements in any block of length $n$
are statistical independent.

We write $S_{X}\left(  f\right)  $ denote the unitary transformation of the
correlation matrix $G_{XX}$, i.e., $S_{X}\left(  f\right)  $ is the power
spectral density (PSD) of the time series process $\left\{  X\left(  n\right)
\right\}  $, at discrete frequencies, $f=0,1,2,...,n-1.$ We henceforth refer
to the transform domain as the spectral domain in keeping with the literature.
Similarly, let $\mathcal{S}_{Y}(\omega)$ and $\mathcal{S}_{XY}(\omega)$ denote
the PSDs of the $\left\{  Y_{k}\right\}  $ and the $\left\{  X_{k}%
Y_{k}\right\}  $ processes where $\mathcal{S}_{XY}(\omega)$ is the transform
of the cross-correlation function $c_{XY}\left(  m\right)  $ of the two
sequences. Let $\phi$ denote the Lagrangian parameter for the distortion
constraint (\ref{Distortion}) in the rate minimization problem. Explicitly
denoting the dependence on the water-level $\phi$, the rate-distortion
function $R_{\phi}\left(  D\right)  $ and the average distortion function
$D\left(  \phi\right)  $ are given by \cite{DSS15}%
\begin{align}
R_{\phi}\left(  D\right)   &  =\int_{-\pi}^{\pi}\max\left(  0,\frac{1}{2}%
\log\frac{\mathcal{S}_{X}(\omega)}{\phi}\right)  \frac{d\omega}{2\pi
}\label{RDGauss}\\
D\left(  \phi\right)   &  =\int_{-\pi}^{\pi}\min\left(  \mathcal{S}_{X}%
(\omega),\phi\right)  \frac{d\omega}{2\pi}.\label{DphiGauss}%
\end{align}
Note that the water-level $\phi$ is determined by the desired average
distortion $D\left(  \phi\right)  =D$. Thus, $R(D)$ for a Gaussian source with
memory can be expressed as an infinite sum of the rate-distortion functions
for independent Gaussian variables, one for each angular frequency $\omega
\in\lbrack-\pi,\pi].$ The \textquotedblleft water-level\textquotedblright%
\ $\phi$ captures the average time-domain distortion constraint across the
spectrum such that the distortion for any $\omega$ is the minimum of the
water-level and the PSD. The privacy leakage $\lambda(D\left(  \phi\right)  )$
is then the infinite sum of the information leakage about $\left\{
Y_{k}\right\}  $ for each $\omega,$ and is given by%
\begin{equation}
\lambda\left(  D\left(  \phi\right)  \right)  =\int_{-\pi}^{\pi}\frac{1}%
{2}\log\left(  \frac{\mathcal{S}_{Y}(\omega)}{\mathcal{S}_{XY}(\omega)g\left(
\omega\right)  +\mathcal{S}_{Y}(\omega)}\right)  \frac{d\omega}{2\pi}%
\end{equation}
where $g\left(  \omega\right)  \equiv\left(  \min\left(  \mathcal{S}%
_{X}(\omega),\phi\right)  -1\right)  .$%
\begin{figure}
[ptb]
\begin{center}
\includegraphics[
trim=0.493711in 0.246865in 0.494854in 0.247580in,
height=2.2667in,
width=3.5397in
]%
{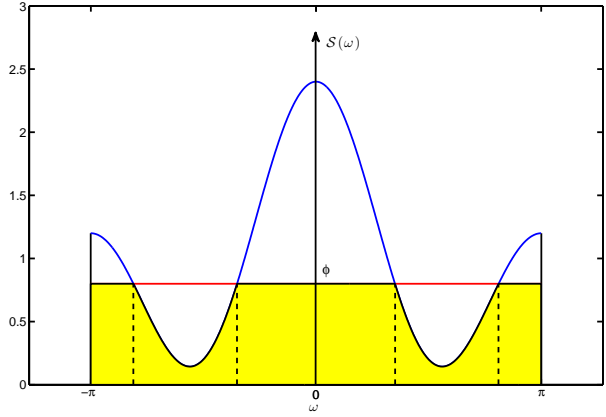}%
\caption{The PSD of $\{X_{k}\}$. The area below the curve and the horizontal
line is equal to $D$.}%
\label{FigPSD}%
\end{center}
\end{figure}

\begin{remark}
The transform domain \textquotedblleft waterfilling\textquotedblright%
\ solution suggests that in practice the time-series data can be filtered for
a desired level of fidelity (distortion) and privacy (leakage) using Fourier
transforms. The privacy-preserving rate-distortion optimal scheme thus reveals
only those frequency components with power above the water-level $\phi$.
Furthermore, at every frequency only the portion of the signal energy which is
above the water level $\phi$ is preserved by the minimum-rate sequence from
which the source can be generated with an average distortion $D$.
\end{remark}

\section{\label{SecIV}Illustration}

The following example illustrates our results. We assume that the private
information to be hidden is the measurement sequence itself, i.e.,
$Y_{k}=X_{k},$ for all $k.$ For the meter measurements modeled as a stationary
Gaussian time series $\left\{  X_{k}\right\}  ,$ we choose $X_{k}%
\sim\mathcal{N}(0,1)$ for all $k\in\mathcal{I}$, and an autocorrelation
function
\[
c_{m}=\mathbb{E}[X_{k}X_{k+m}]=\left\{
\begin{array}
[c]{ll}%
1 & m=0,\\
0.3 & m=\pm1,\\
0.4 & m=\pm2,\\
0 & \mathrm{otherwise.}%
\end{array}
\right.
\]
The power spectral density PSD (frequency domain representation of the
autocorrelation function) of this process is given by
\begin{multline}
\mathcal{S}(\omega)=\sum_{m=-\infty}^{\infty}c_{m}\exp(im\omega)=1+0.6\cos
(\omega)+0.8\cos(2\omega),\\
-\pi\leq\omega\leq\pi.
\end{multline}
In order to obtain the rate-distortion function $R_{\phi}(D)$ for this source,
for a given $D$ we have to find the water-level $\phi$ satisfying
(\ref{DphiGauss}).

Figure~\ref{FigPSD} shows the PSD function $c_{m}$. Determining $\phi$ is
equivalent to determining the height of the horizontal line, such that the
area below the curve and the line equals $D$ as given by (\ref{DphiGauss}).
Having determined $\phi$, $R_{\phi}(D)$ is then given by (\ref{RDGauss}).
$\mathcal{S}(\omega)$ takes its minimum value at $\omega_{0}=\arccos(\frac
{-3}{16})\simeq1.7594$. Thus, for $D\leq\mathcal{S}(\omega_{0})\simeq0.1437$,
$\phi=D$ such that $R(D)=\frac{1}{4\pi}\int_{-\pi}^{\pi}\log\left(
\mathcal{S}(\omega)/D\right)  d\omega,$which is the same as the
rate-distortion function for a Gaussian source with variance $\tilde{\sigma
}^{2}=\frac{1}{2\pi}\int_{-\pi}^{\pi}\log\mathcal{S}(\omega)d\omega$, i.e.,
when the distortion falls below a certain threshold, the rate required to
reproduce the source at the receiver with the desired fidelity is the same as
that of a memoryless Gaussian source. Finally, since we have chosen to hide
the original meter measurements, for this problem, the privacy leakage is the
same as the rate distortion. The resulting tradeoff between is shown in Fig.
\ref{FigRD}.%

\begin{figure}
[ptb]
\begin{center}
\includegraphics[
trim=0.497466in 0.259822in 0.542536in 0.261705in,
height=2.418in,
width=3.1298in
]%
{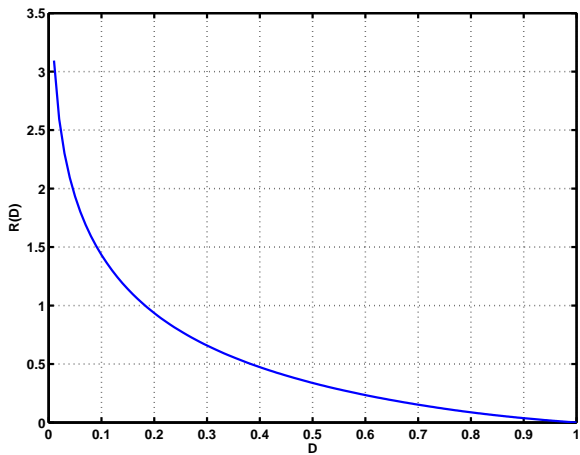}%
\caption{Plot of $R_{\phi}(D)$ $=$ $\lambda_{\phi}(D)$ vs. average distortion
$D.$}%
\label{FigRD}%
\end{center}
\end{figure}

\section{\label{SecV}Discussion and Concluding Remarks}

The theoretical framework that we have developed here allows us to precisely
quantify the utility-privacy tradeoff problem in smart meter data. Given a
series of smart meter measurements $X$, we reveal a perturbation $\hat{X}$
that allows us to guarantee a measure of both privacy in $X$\ and utility in
$\hat{X}$. The privacy guarantee comes from the bound on information leakage
while the utility guarantee comes from the upper bound on the MSE\ distance
between $X$\ and $\hat{X}$.

Our model of privacy, namely information leakage, does not depend on any
assumptions about the inference mechanism (i.e. the data mining algorithms);
instead it presents the least possible (on average) guarantee of information
leakage about $X$, while the utility is preserved in an application-agnostic
manner. Our framework is also agnostic about how the perturbation is achieved;
for example it can be achieved using a filter such as a battery or by adding noise.

Modeling a smart meter as a Gaussian source with memory and extending known
results from rate distortion theory, we show that a utility-privacy tradeoff
framework can be constructed that gives tight bounds on the amount of privacy
that can be achieved for a given level of utility and vice-versa. The critical
parameter of choice in the utility-privacy tradeoff is the water level $\phi$,
which in turn depends on the bound on the distortion that is acceptable. The
choice of $\phi$ dictates the extent to which the original signal (meter
measurements)\ can be distorted and the rate $R_{\phi}\left(  D\right)  $\ is
the maximum data precision allowed for which information leakage is at most
$\lambda(D\left(  \phi\right)  )$. In a practical context, the choice of
$\phi$ is dictated by the choice of the privacy-utility tradeoff operating
point, which in turn has to be negotiated between the energy provider and consumer.

Our distortion model can be viewed as a filter on the load signal $X$ -- it
filters out all frequencies that have power below a certain threshold
(determined directly by $\phi$). This filter is novel and comes directly as a
result of our model. From a practical point of view, it makes sense in the
following way. From the appliance signature chart in \cite{QuinnColState},
frequency components that have low power typically correspond to fluctuations
in energy consumption that are short-lived, which in turn are caused by
appliances such as kettles and television sets and transmit the bulk of
information about underlying human behavior. Frequency components that have
high power tend to caused by continuously running appliances such as air
conditioning units and refrigerators that reveal much less about human
behavior. Suppressing low power components would thus reduce or eliminate the
components of the signal that are likely to be most revealing about human
behavior and thus match our intuition on privacy protection in smart meter
data. At the same time, our utility constraints guarantee that most of the
useful energy consumption information is retained in the revealed load data.
This holds out hope that we can reveal significant energy consumption
information while at the same time protecting significant personal information
in a tunable tradeoff. This would be an interesting avenue for further
exploration. Another interesting avenue to explore would be to apply and
demonstrate the power of these concepts in a practical context.

\bibliographystyle{IEEEtran}
\bibliography{refsSG}

\end{document}